\def\BibTeX{{\rm B\kern-.05em{\sc i\kern-.025em b}\kern-.08em
    T\kern-.1667em\lower.7ex\hbox{E}\kern-.125emX}}
\newcommand{\card}[1]{\left|{#1}\right|}
\newcommand{\diff}{d}
\newcommand{\rows}{\mathcal{R}}
\newcommand{\columns}{\mathcal{C}}
\newcommand{\IDMAT}{\mathbf{I}}
\newcommand{\z}{z}
\newcommand{\ex}{\Theta}
\newcommand{\bldb}{\mathbf{b}}
\newcommand{\cc}{\Psi}
\newcommand{\elem}{x}
\newcommand{\SET}{\mathcal{S}}
\newcommand{\NEXT}{e}
\newcommand{\DOMAIN}{\mathcal{X}}
\newcommand{\NDOMAIN}{n}
\newcommand{\PRR}{R}
\newcommand{\HASH}{H}
\newcommand{\SETHASH}{\Xi}
\newcommand{\seed}{\text{seed}}
\newcommand{\CHSUMHASH}{\mathfrak{B}}
\newcommand{\SHASH}{\mathcal{H}}
\newcommand{\NHASHES}{h}
\newcommand{\NFILTER}{N}
\newcommand{\NCHSUM}{C}
\newcommand{\NIBF}{f}
\newcommand{\FILTER}{\mathcal{F}}
\newcommand{\FSMAT}{\mathbf{F}}
\newcommand{\FSMATS}{\mathcal{S}}
\newcommand{\cS}{\mathcal{S}}
\newcommand{\cT}{\mathcal{T}}
\newcommand{\nn}{\mathbb{N}}
\newcommand{\bldzero}{\mathbf{0}}
\newcommand{\Fcell}{\mathtt{c}}
\newcommand{\Fcount}{\mathtt{count}}
\newcommand{\Fsum}{\mathtt{val}}
\newcommand{\Fchsum}{\mathtt{ch}}
\DeclareMathOperator{\Insert}{\mathsf{Insert}}
\DeclareMathOperator{\Remove}{\mathsf{Remove}}
\DeclareMathOperator{\Test}{\mathsf{Test}}
\DeclareMathOperator{\Extract}{\mathsf{Extract}}
\DeclareMathOperator{\Esf}{\mathsf{E}}
\DeclareMathOperator{\Add}{\mathsf{Add}}
\newcommand{\larray}[1]{\langle{#1}\rangle}
\theoremstyle{definition}
\theoremstyle{plain}
\newtheorem{lemma}{Lemma}
\newtheorem{theorem}{Theorem}
\newtheorem{protocol}{Protocol}
\begin{document}

\title{Failure Probability Analysis for Partial Extraction from Invertible Bloom Filters\\
\thanks{This work is supported in part by the grant PRG49 from the Estonian Research Council and by the European Regional Development Fund via CoE project EXCITE.}
}

\author{
	\IEEEauthorblockN{Ivo Kubjas, Vitaly Skachek} \\
	\IEEEauthorblockA{
	\textit{Institute of Computer Science} \\
	University of Tartu, Estonia\\
	Tartu, Estonia \\
	\textit{\{ivo.kubjas,vitaly.skachek\}@ut.ee}
	}
}
\maketitle

\begin{abstract}
Invertible Bloom Filter (IBF) is a data structure, which employs a small set of hash functions. An IBF allows for an efficient insertion and, with high probability, for an efficient extraction of the data. However, the success probability of the extraction depends on the storage overhead of an IBF and the amount of the data stored.
In an application, such as set reconciliation, where there is a need to extract data stored in the IBF, the extraction might succeed only partially, by recovering only part of the stored data.
In this work, the probability of success for a partial extraction of data from an IBF is analyzed. It is shown that partial extraction could be useful in applications, such as set reconciliation. In particular, it allows for set reconciliation by using the IBF, where the storage overhead is too small to allow full extraction. An upper bound on the number of rounds in an iterative set reconciliation protocol is presented. The numerical results are derived analytically, and confirmed by the computer simulations.
\end{abstract}

\begin{IEEEkeywords}
Invertible Bloom Filters, partial extraction, set reconciliation, failure probability analysis.
\end{IEEEkeywords}

\section{Introduction}
\label{sec:intro}

\subsection{Background}

Set reconciliation problem~\cite{MTZ03} considers a scenario where two parties $A$ and $B$ possess the sets of data $\SET_A$ and $\SET_B$, respectively,
$\SET_A, \SET_B \subseteq \DOMAIN$. The size of the symmetric difference of the two sets $\diff \triangleq \card{\SET_A \triangle \SET_B}$ is small when compared to the sizes of $\SET_A$ and $\SET_B$. The goal of the problem is to design an efficient protocol, such that after it terminates, both parties possess the set $\SET_A \cup \SET_B$.
The number of parties can also be larger than two.

A na\"ive protocol, which is based on broadcasting the whole sets by each party, is sub-optimal in the cases where $\diff$ is small. Several solutions which achieve communication complexity linear in $\diff$ have been proposed for this scenario: \cite{MTZ03} uses interpolation of characteristic polynomials; \cite{goodrich2011short, goodrich2011invertible} suggest using \emph{Invertible Bloom Filters (IBFs)} for reconciliation;~\cite{EGUV11} specifies such a protocol in full detail.

The IBF is a data structure used to store the set elements (numbers, or, more generally, files). The extraction of the elements from the IBF might fail, depending on the allocation of the storage cells to the elements. This allocation is done according to a fixed small set of hash functions. Typically, for a given set of elements, the larger the size of the IBF, the lower the failure probability is. Therefore, it is an important problem to reduce the failure probability, while at the same time reducing the storage overhead of the IBF.

In this work, we show that even if full extraction of the elements from the IBF fails, the partial extraction is still beneficial
for the reconciliation protocol. Thus, when relying on a partial extraction of the elements, by iterative repetition of (partial) extraction,
one can achieve a full reconciliation efficiently, while using an IBF with a smaller storage overhead. Our analysis is based on
the counting of so-called \emph{state matrices}~\cite{2014IEITF..97.2309Y, Yugawa}, which describe the content of the IBF. The counting allows for estimating the
probability of a failure when at least $\NEXT$ elements are extracted from the IBF.
We compare our results with the existing counterparts in the literature. In particular, we show that by using an iterative protocol, full reconciliation is possible
for the range of parameters, when the counterparts in the literature do not provide for such a result.

This paper is organized as follows. Introduction and notations are presented in Section~\ref{sec:intro}.
IBFs are defined in Section~\ref{sec:IBF}.
The discussion of the setup for partial extraction together with the main lemmas appear in Section~\ref{sec:partial}.
The iterative protocol for set reconciliation is presented and analyzed in Section~\ref{sec:iterative}.
The numerical and simulation results are shown in Section~\ref{sec:numerical}.
The conclusions are stated in Section~\ref{sec:conclusions}.

\subsection{Notations}

In what follows, vectors are denoted by small bold letters, matrices are denoted by capital letters, and scalars are denoted by small regular letters.
We use the notation $\nn$ for the set of natural numbers including zero.

For the set $\cS$, the notation $\card{\cS}$ denotes its cardinality. The subset $\cS_A \triangle \cS_B \triangleq (\cS_A \backslash \cS_B) \cup (\cS_B \backslash \cS_A)$
is a symmetric difference of the sets $\cS_A$ and $\cS_B$.
We also define $[n] \triangleq \{1, 2, \ldots, n \}$ and $[a,b] \triangleq \{a, a+1, \ldots, b \}$ for $a, b \in \nn, \; a \le b$.  Let $\bldzero$ be an all-zeros vector, where the length of the vector is clear from the context.

Let $\NEXT \in \nn$. We use the notation $\cT_\NEXT \subseteq \nn^h$ to denote the set of vectors of length $h$ with the entries from the set $[0, \NEXT]$.
The letter $h$ will be defined in the sequel.

\section{Invertible Bloom Filters}
\label{sec:IBF}
An IBF is a data structure that supports operations $\Insert()$, $\Remove()$, $\Test()$ and $\Extract()$. IBFs are constructed using a set of hash functions $\HASH_i \in \SHASH$, $i \in [\NHASHES]$, where each $\HASH_i$ maps inputs in the domain $\DOMAIN$ to a significantly smaller set $[\NFILTER]$. Hereafter, we assume that all $\HASH_i$ can be described efficiently, and that they can be enumerated. We also assume that the values $\HASH_i(x)$ are distributed uniformly, i.e. for a uniform selection $\elem \in \DOMAIN$ the probability of $\HASH_i(\elem)$ to take any value in $[\NFILTER]$ is exactly $1/\NFILTER$. In addition, we make use of an uniform hash function $\CHSUMHASH$ for checksum, which is also defined over $\DOMAIN$, but its range is $[\NCHSUM]$ for a large constant $\NCHSUM \in \nn$. When constructing an IBF, we choose hash functions $\HASH_i$ such that
\begin{equation}
\forall \elem \in \DOMAIN \; : \;
\HASH_i(\elem) \neq \HASH_j(\elem) \mbox{ for every } i, j \in [\NHASHES], \; i \neq j \; .
\label{eq:no-collision}
\end{equation}
This condition can be easily achieved if the images of the hash functions $\HASH_i$ are all pairwise disjoint subsets of $[\NFILTER]$.

IBF is defined as an array of cells of the form $\larray{\Fcount, \Fsum, \Fchsum}$, where the field $\Fcount$ contains an integer or a finite field element, field $\Fsum$ contains an element in $\DOMAIN$, and the field $\Fchsum$ contains an element in $[\NCHSUM]$. If we denote the $i$-th cell as $\Fcell_i$, then an IBF $\FILTER$ is an array $\larray{\Fcell_1, \ldots \Fcell_{\NFILTER}}$.

In the beginning, an IBF $\FILTER$ is initialized by setting all the fields of all the cells to zero.
In order to insert an element $\elem$ into an IBF $\FILTER$, $\Insert()$ computes the index $j_i = \HASH_{i}(\elem)$ of the corresponding cell, for each $i \in [\NHASHES]$. Then, the cell $\Fcell_{j_i}$ is updated by incrementing the field $\Fcount$ by one, by adding $\elem$ to the field $\Fsum$, and by adding $\CHSUMHASH(\elem)$ to the field $\Fchsum$. The total number of inserted elements into $\FILTER$ is denoted as $f \triangleq \card{\FILTER}$.

Extraction of the elements from an IBF $\FILTER$ works as following. $\Extract()$ iterates over all cells until it finds a cell $\Fcell_j$ with $\Fcount$ field value $\pm 1$. It is verified that the fields of $\Fcell_j$ satisfy $\Fchsum = \CHSUMHASH(\Fsum)$ when $\Fcount = 1$, and $-\Fchsum = \CHSUMHASH(-\Fsum)$
when $\Fcount = -1$. Then, the corresponding value $\Fsum$ is extracted, namely, it is inserted into a special set of extracted elements $\SET_{\FILTER}$, and the corresponding element is removed using $\Remove()$ from $\FILTER$. The latter does the opposite of what $\Insert()$ does. After that, the extraction procedure proceeds analogously with the rest of the cells in $\FILTER$. If no element is extracted while looping over the cells, then $\SET_{\FILTER}$ is returned and the procedure halts.

\section{Partial Extraction}
\label{sec:partial}
\subsection{Problem statement}

The procedure $\Extract()$ may halt when the IBF $\FILTER$ is empty. In that case,
all the elements were extracted from $\FILTER$. However, there is another possibility, that is
only a proper subset $\SET_{\FILTER}$ of the set of elements in $\FILTER$ has been extracted.
We define the \emph{extraction rate}
of a run of $\Extract()$, as follows:
\begin{equation}
R_{\Esf} = \frac{\card{\SET_{\FILTER}}}{\NIBF} \; .
\end{equation}

In~\cite{goodrich2011invertible}, the success probability of $\Extract()$
was considered only for the case $R_{\Esf} = 1$. The main
result of~\cite{goodrich2011invertible} is given in the next theorem.

\begin{theorem}[Theorem~1~\cite{goodrich2011invertible}]
	\label{thm:goodrich_thm}
	Define $c_{\NHASHES}$ as
	\begin{equation}
	c_{\NHASHES}^{-1} = \sup\lbrace
	\alpha: 0 < \alpha < 1; \forall x \in (0,1),
	1 - e^{-\NHASHES \alpha x^{\NHASHES-1}} < x
	\rbrace.
	\end{equation}

	Then, as long as $\NFILTER$ is chosen such that $\NFILTER > (c_{\NHASHES} +
	\varepsilon) \NIBF_0$ for some $\varepsilon > 0$, $\Extract()$ fails with
	probability $O(\NIBF_0^{-\NHASHES + 2})$ whenever $\NIBF \leq \NIBF_0$, where $\NIBF_0 \triangleq \NIBF_0(\NFILTER, \NHASHES)$ is a
	threshold value derived in~\cite{goodrich2011invertible}.
\end{theorem}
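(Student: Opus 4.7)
The plan is to reformulate $\Extract()$ as a peeling process on a random $\NHASHES$-uniform hypergraph, and then invoke the classical threshold analysis for the emergence of a non-empty $2$-core in such hypergraphs.

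First, I would build the following bipartite structure $G$: the left vertices are the $\NIBF$ inserted elements, the right vertices are the $\NFILTER$ cells, and each element is connected to the $\NHASHES$ cells that the hash functions $\HASH_1,\ldots,\HASH_\NHASHES$ map it to. Because we assume uniform hashing together with the non-collision property~\eqref{eq:no-collision}, the resulting hypergraph is distributed as a uniformly random $\NHASHES$-uniform hypergraph with $\NIBF$ hyperedges on $\NFILTER$ vertices (up to lower-order terms). The key observation is that a cell has $\Fcount = \pm 1$ iff the corresponding right vertex has degree exactly one; the checksum test $\Fchsum = \CHSUMHASH(\Fsum)$ can only fail with probability $O(1/\NCHSUM)$ per attempt, so with overwhelming probability $\Extract()$ is precisely the standard peeling process. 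Consequently, $\Extract()$ succeeds iff this process reduces $G$ to the empty hypergraph, i.e., iff $G$ has empty $2$-core.

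Second, I would identify the constant $c_\NHASHES$ with the critical density for $2$-core emergence. The branching-process analysis of peeling on random $\NHASHES$-uniform hypergraphs yields a fixed-point equation whose survival condition is exactly $1 - e^{-\NHASHES \alpha x^{\NHASHES-1}} < x$ for all $x \in (0,1)$; the supremum of $\alpha$ for which this inequality holds is $c_\NHASHES^{-1}$. Thus, whenever $\NFILTER > (c_\NHASHES + \varepsilon)\NIBF_0$ and $\NIBF \le \NIBF_0$, the effective density $\NIBF / \NFILTER$ is bounded away from the critical value, so the peeling process terminates with an empty $2$-core asymptotically almost surely.

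Third, to upgrade the qualitative $o(1)$ failure bound to the quantitative rate $O(\NIBF_0^{-\NHASHES+2})$, I would carry out a direct first-moment argument on minimal stopping sets. A stopping configuration for peeling is a sub-hypergraph of $s$ elements whose $\NHASHES s$ incidences cover at most $\NHASHES s / 2$ cells, each at least twice. Enumerating by size, the expected number of such sub-hypergraphs of size $s$ is bounded by a term of the form $\binom{\NIBF}{s} \binom{\NFILTER}{\NHASHES s/2} (\NHASHES s /(2\NFILTER))^{\NHASHES s}$, and below the threshold this sum is dominated by $s = 2$, contributing $O(\NIBF_0^2 / \NFILTER^{\NHASHES}) = O(\NIBF_0^{-\NHASHES+2})$. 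Summing the contributions for $s \ge 3$ gives strictly smaller terms, and adding the $O(\NIBF_0 / \NCHSUM)$ contribution from checksum collisions preserves the bound.

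The main obstacle is the last step: handling the intermediate-size stopping sets requires careful bounding of $\binom{\NIBF}{s}\binom{\NFILTER}{\NHASHES s /2}(\NHASHES s/(2\NFILTER))^{\NHASHES s}$ uniformly for $s$ growing with $\NIBF_0$, so that the sub-threshold condition from the functional inequality can be used to absorb the combinatorial growth. The threshold step itself follows standard fixed-point analysis of random hypergraph cores and does not need to be redone; the novelty of such a proof is almost entirely the quantitative moment computation.
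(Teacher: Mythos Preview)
The paper does not prove this theorem at all: it is quoted verbatim as Theorem~1 of~\cite{goodrich2011invertible} and used as a black box for the subsequent analysis. There is therefore no in-paper proof to compare your proposal against.

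That said, your sketch is essentially the argument of~\cite{goodrich2011invertible} itself: identify $\Extract()$ with the peeling process on a random $\NHASHES$-uniform hypergraph, recognise $c_\NHASHES$ as the $2$-core threshold via the standard fixed-point equation, and obtain the $O(\NIBF_0^{-\NHASHES+2})$ rate by a first-moment bound on minimal stopping configurations dominated by the $s=2$ term. One caveat on your third step: your stopping-set count $\binom{\NIBF}{s}\binom{\NFILTER}{\NHASHES s/2}(\NHASHES s/(2\NFILTER))^{\NHASHES s}$ is the right shape for the smallest obstructions, but a genuine $2$-core need not have exactly $\NHASHES s/2$ occupied cells, so the uniform control over intermediate $s$ that you flag as ``the main obstacle'' really does require the more careful union bound from~\cite{goodrich2011invertible} (or the original hypergraph-core papers it cites), not just the displayed expression. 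With that refinement your outline matches the cited proof.
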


{\bf Remark:}
The quantity $\NIBF_0$ in Theorem~\ref{thm:goodrich_thm} is referred to as the \emph{threshold} in the text. The IBFs with $\NIBF \leq \NIBF_0$ and
$\NIBF > \NIBF_0$ are called the \emph{under-threshold IBF} and \emph{over-threshold IBF}, respectively.
\medskip

It follows from Theorem~\ref{thm:goodrich_thm} that if the ratio between the number of available cells
and the number of inserted elements in the IBF is at least $c_{\NHASHES}$, then the extraction rate is $R_{\Esf}=1$ with probability close to 1.
In Table~\ref{tbl:ibf_overhead}, we reproduce the values of $c_{\NHASHES}$ for a varying number of hash functions $\NHASHES$,
as it was computed in~\cite{goodrich2011invertible} using Theorem~\ref{thm:goodrich_thm}.

\renewcommand{\arraystretch}{1.3}
\begin{table}[htbp]
	\caption{Thresholds for IBF overhead versus number of hash functions~\cite{goodrich2011invertible}.}
	\begin{center}
	\begin{tabular}{c | c c c c c}
	  \hline
		$\NHASHES$ & 3 & 4 & 5 & 6 & 7 \\
		\hline
		$c_{\NHASHES}$ & $1.222$ & $1.295$ & $1.425$ & $1.570$ & $1.721$ \\
		\hline
	\end{tabular}
	\end{center}
	\label{tbl:ibf_overhead}
\end{table}

The result in Theorem~\ref{thm:goodrich_thm} is restricted to the case of full extraction. Moreover, the result is suitable for the asymptotic regime, and it is not directly applicable to finite parameters. However, it might be useful to estimate the fraction of extracted elements in the case of partial extraction, i.e. when $R_{\Esf} < 1$. As we show in the sequel, a small number of iterative executions of partial extraction can be sufficient to extract all the elements in the IBF.

To the best of our knowledge, partial extraction from the IBFs was not studied in the existing literature. The technique used in~\cite{goodrich2011invertible} for proving Theorem~\ref{thm:goodrich_thm} is constrained by the threshold ratio $c_h$ between the number of cells and the number of inserted elements. If that ratio is larger than $c_h$, the extraction is full with high probability, and the question of partial extraction becomes redundant. On the other hand, if the ratio is below $c_h$, the application of the techniques in~\cite{goodrich2011invertible} is not straightforward.

\subsection{IBF state matrix representation}

Next, we describe an alternative method to represent an IBF, which is based on a \emph{state matrix}\cite{2014IEITF..97.2309Y, Yugawa}.
Assume that an IBF $\FILTER$ contains $\NIBF$ elements $\elem_1, \elem_2, \ldots, \elem_{\NIBF}$.
The state matrix $\FSMAT$ of an IBF $\FILTER$
is a $\NFILTER \times \NIBF$ binary matrix which contains $\NIBF \cdot \NHASHES$ non-zero entries, where the entry in row $i$ and column $j$, $F_{i,j}=1$, if there exists $\ell \in [\NHASHES]$ such that $h_{\ell}(\elem_j) = i$, and $F_{i,j}=0$ otherwise.
Due to condition~\eqref{eq:no-collision}, the Hamming weight of every column in $\FSMAT$ is exactly $\NHASHES$. For simplicity, we take $\NFILTER_{\HASH} \triangleq \NFILTER/\NHASHES$, and further assume that the range of $\HASH_i$ is $[(i-1) \NFILTER_{\HASH} + 1, i \NFILTER_{\HASH}]$, $i \in [\NHASHES]$. For this choice, the condition~\eqref{eq:no-collision} is fulfilled, and $\FILTER$ can be viewed as if it is partitioned into sub-IBFs $\FILTER_i$, $i \in [\NHASHES]$. The matrix $\FSMAT$ can also be viewed as a matrix of $h$ blocks $\FSMAT_i$ of size $\NFILTER_{\HASH} \times \NIBF$ each, as follows:

\begin{equation}
\FSMAT = \left[\begin{matrix}
\FSMAT_1 \\
\FSMAT_2 \\
\vdots \\
\FSMAT_{\NHASHES}
\end{matrix}\right].\label{eq:submatrices}
\end{equation}

Consider the set $\FSMATS_{\NFILTER_{\HASH}, \NIBF}$ of all $\NFILTER_{\HASH} \times \NIBF$ binary matrices having all their columns of weight one. The number of such matrices is ${\NFILTER_{\HASH}}^{\NIBF}$. The number of possibilities to choose $\NHASHES$ matrices from the set $\FSMATS_{\NFILTER_{\HASH}, \NIBF}$ (with repetitions, the order of choices is important) is ${\NFILTER_{\HASH}}^{\NIBF \NHASHES}$.

If the $i$-th row of $\FSMAT_\ell$ has Hamming weight one, then it corresponds to a value of $\Fcount$ field equal to 1 in the corresponding $i$-th cell of $\FILTER_\ell$. Thus, in that case it is possible to extract a value $\elem_j$ corresponding to the $i$-th row and $j$-th column (of the matrix $\FSMAT_\ell$), where this nonzero  appears. Such an entry $F_{i, j}$ in $\FSMAT_\ell$ is called a pivot.

Consider the execution of the procedure $\Extract()$ applied to the IBF $\FILTER$. The extraction of the element $x_j$ from the $i$-th cell of $\FILTER_\ell$ can be associated with the removal of the $j$-th column in the corresponding $\FSMAT_\ell$. More generally, the extraction of the element $x_j$ from all $\FILTER_\ell$, $\ell \in [\NHASHES]$, can be associated with the removal of the $j$-th column in $\FSMAT$.
This process is repeated iteratively, and it stops when there are no rows of weight one left. As the matrix $\FSMAT$ is partitioned into submatrices, this means that in every submatrix $\FSMAT_i$, $i \in [\NHASHES]$, there is no row of weight one left. The binary matrix which has no rows of weight one is called a \emph{stopping matrix}~\cite{2014IEITF..97.2309Y}. It is shown therein that the number of the $\NFILTER_{\HASH} \times \NIBF$ stopping matrices is given by the following recursive relation:
\begin{equation}
\z(\NFILTER_{\HASH}, \NIBF) =
{\NFILTER_{\HASH}}^{\NIBF} \\
- \sum_{i=1}^{\min(\NFILTER_{\HASH}, \NIBF)} i! \binom{\NFILTER_{\HASH}}{i} \binom{\NIBF}{i} \z(\NFILTER_{\HASH} - i, \NIBF - i) \label{eq:z}
\end{equation}
for $\NFILTER_{\HASH} \ge 1$ and $\NIBF \ge 1$.

\subsection{Counting argument}

We can now state the first result which gives the number of the state matrices $\FSMAT$ in $\underbrace{{\FSMATS_{\NFILTER_{\HASH}, \NIBF}} \times \cdots \times {\FSMATS_{\NFILTER_{\HASH}, \NIBF}}}_{\NHASHES \mbox{ \footnotesize times}}$, which allow for extraction of at least $\NEXT$ elements.

\begin{lemma}
	\label{lemma:sm_lower_bound}
	Let $\FILTER$ be an IBF with $\NHASHES = 1$, which has $\NFILTER$ cells and $\NIBF$ inserted elements.
	Then, the number of state matrices allowing to extract exactly $\NEXT$ elements is:
	\begin{equation}
	\ex_{1}(\NFILTER_{\HASH}, \NIBF, \NEXT) = \binom{\NIBF}{\NEXT} \binom{\NFILTER_{\HASH}}{\NEXT} \cdot \NEXT! \cdot z(\NFILTER_{\HASH} - \NEXT, \NIBF - \NEXT) \; .
	\label{eq:ex1_h1}
	\end{equation}
\end{lemma}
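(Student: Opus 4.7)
The plan is to exploit the special structure of the $\NHASHES = 1$ case, in which the peeling-based extraction is non-cascading, and then to decompose any admissible state matrix into an ``extractable'' part and a ``stopping'' part that can be counted independently.

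First I would observe that when $\NHASHES = 1$, the state matrix $\FSMAT$ coincides with its single block $\FSMAT_1$, and every column has Hamming weight exactly one, so each column points to a unique row. The key structural fact is that removing a weight-one column whose unique $1$ lies in row $i$ only affects the weight of row $i$ itself (which drops to zero), while leaving the weights of all other rows unchanged. Therefore $\Extract()$ does not create any new rows of weight one during its run; it simply peels, one by one, exactly the columns whose unique $1$ lies in a row that already had weight one in the original $\FSMAT$. Consequently, $\Extract()$ outputs exactly $\NEXT$ elements if and only if $\FSMAT$ has exactly $\NEXT$ rows of Hamming weight one.

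Next I would enumerate such matrices by a decomposition argument. Let $S \subseteq [\NFILTER_{\HASH}]$ denote the set of weight-one rows and let $T \subseteq [\NIBF]$ denote the set of columns whose $1$ lies in $S$. The constraints on $\FSMAT$ are then: $\card{S} = \card{T} = \NEXT$; the submatrix $\FSMAT|_{S \times T}$ is a permutation matrix, i.e., a bijection $T \to S$; every column outside $T$ has its $1$ outside $S$, for otherwise some row of $S$ would have weight larger than one, contradicting extractability; and the complementary $(\NFILTER_{\HASH} - \NEXT) \times (\NIBF - \NEXT)$ submatrix on $([\NFILTER_{\HASH}] \setminus S) \times ([\NIBF] \setminus T)$ has no row of weight one, i.e., it is a stopping matrix in the sense of~\eqref{eq:z}. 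These four choices are independent, contributing $\binom{\NIBF}{\NEXT}$, $\binom{\NFILTER_{\HASH}}{\NEXT}$, $\NEXT!$, and $\z(\NFILTER_{\HASH} - \NEXT, \NIBF - \NEXT)$ respectively, whose product is precisely the claimed formula.

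The main obstacle is the careful verification that the $\NHASHES = 1$ peeling process is genuinely non-cascading, so that the set of extracted elements depends only on which rows of $\FSMAT$ have weight one in the initial matrix, and not on the peeling order. Once this non-cascading property is established, the remaining counting is a clean bijective argument, because the permutation part of $\FSMAT$ on $S \times T$ and the stopping part on the complementary block interact only through the choice of $S$ and $T$, and are otherwise unconstrained with respect to one another.
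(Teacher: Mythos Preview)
Your proposal is correct and follows essentially the same approach as the paper: both identify the set of weight-one rows and the columns hitting them, decompose $\FSMAT$ into a permutation block on those indices and a stopping matrix on the complementary block, and multiply the independent counts. Your write-up is in fact slightly more careful than the paper's, since you explicitly justify the non-cascading property (removing a weight-one column only zeroes its own row when $\NHASHES=1$), which the paper asserts but does not argue.
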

\begin{proof}
	In order to be able to extract exactly $\NEXT$ elements, the matrix $\FSMAT = \FSMAT_1$ should have exactly $\NEXT$ rows of weight one. We denote the set of indices of these rows as $\rows$ and the set of columns where any of these rows has one as $\columns$. Consider the following submatrices of $\FSMAT$:
	\begin{itemize}
		\item
		\label{enum:identity}
		induced by the rows-columns pair $(\rows, \columns)$ -- every row and column has only a single one, thus it is a permutation of the identity matrix $\IDMAT$;
		\item
		\label{enum:restrow}
		induced by the rows-columns pair $(\rows, [\NIBF] \setminus \columns)$ -- the rows indexed by $\rows$ have only a single one in every row, thus this submatrix is a zero matrix;
		\item
		\label{enum:restcolumn}
		induced by the rows-columns pair $([\NFILTER_{\HASH}] \setminus \rows, \columns)$ -- since every column in $\columns$ has a single one, thus it is a zero matrix;
		\item
		\label{enum:stopping}
		induced by rows-columns pair $([\NFILTER_{\HASH}] \setminus \rows, [\NIBF] \setminus \columns)$ -- since it is not possible to extract any further element, it must be a stopping matrix.
	\end{itemize}

	There are $\binom{\NIBF}{\NEXT}$ ways to choose the row indices $\rows$ and $	\binom{\NFILTER_{\HASH}}{\NEXT}$ ways to choose the column indices $\columns$. Additionally, there are $\NEXT!$ permutations of the columns of $\IDMAT$. There are $\z(\NFILTER_{\HASH} - \NEXT, \NIBF - \NEXT)$ ways to choose the stopping matrix in subcase~\ref{enum:stopping}. Thus, the total number of matrices which allow for extracting exactly $\NEXT$ elements is as given in equation~\eqref{eq:ex1_h1}.
	\end{proof}


Let $\columns_i$, $i \in [\NHASHES]$, be the set of columns in $\FSMAT_i$ which contain nonzero entries appearing in rows of weight one.
Consider the vector	$\bldb = (b_1, b_2, \ldots, b_{\NHASHES}) \in \nn^h$.
Denote $\xi(\bldb) \triangleq \sum_{i \in [\NHASHES]} b_i$.

	We generalize Lemma~\ref{lemma:sm_lower_bound} to the case where the number of hash functions is larger than one. The result is summarized in the following lemma.
	\begin{lemma}
	\label{lemma:sm_bound_generalized}
	Let $\FILTER$ be an IBF with $\NHASHES > 1$ hash functions, which has $\NFILTER$ cells and $\NIBF$ inserted elements.
	Then, the number of state matrices allowing to extract $\NEXT$ elements is at least:
	\begin{equation}
	\ex(\NFILTER_{\HASH}, \NIBF, \NHASHES, \NEXT) = \binom{\NIBF}{\NEXT} \cdot \sum\limits_{\bldb \in \cT_\NEXT \;:\; \xi(\bldb) \geq \NEXT} \Bigg( \cc(\NEXT, \bldb)
	\cdot \prod\limits_{i \in [\NHASHES]} \left[ \binom{\NFILTER_{\HASH}}{b_i} \cdot b_i! \cdot \z(\NFILTER_{\HASH}-b_i, \NIBF-b_i) \right]\Bigg),
	\label{eq:bound}
	\end{equation}
	where $\NFILTER_{\HASH} = \frac{\NFILTER}{\NHASHES}$, and the function $\cc$ is defined recursively as
	\begin{equation}
	\cc(\NEXT, \bldb) = \begin{cases}
	1 \hspace{6ex} \text{if } e = 0 \mbox{ and } \bldb = \bldzero \\
  0 \hspace{6ex} \text{if } e = 0 \mbox{ and } \bldb \neq \bldzero \\
	\prod\limits_{i \in [\NHASHES]} \binom{\NEXT}{b_i} - \sum\limits_{j=0}^{\NEXT-1} \left(\binom{\NEXT}{j} \cc(j, \bldb) \right) \quad \text{otherwise}
	\end{cases}.
	\end{equation}
\end{lemma}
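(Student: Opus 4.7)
The plan is to generalize the four-block decomposition from Lemma~\ref{lemma:sm_lower_bound} to the $\NHASHES$-block state matrix in~\eqref{eq:submatrices} by enumerating a clean sub-family of admissible state matrices. Specifically, I would restrict attention to matrices for which $\bigl|\bigcup_{i\in[\NHASHES]}\columns_i\bigr|=\NEXT$, so that precisely $\NEXT$ columns are peelable in a single round; every such matrix certifies extraction of at least $\NEXT$ elements, so counting only these yields a lower bound on the true count.

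I would organize the enumeration as follows. \textbf{(i)} Choose the set $\columns\subseteq[\NIBF]$ of $\NEXT$ target columns: $\binom{\NIBF}{\NEXT}$ possibilities. \textbf{(ii)} For a fixed size profile $\bldb=(b_1,\ldots,b_{\NHASHES})$ with $b_i=|\columns_i|$, choose the $\NHASHES$-tuple of subsets $(S_1,\ldots,S_{\NHASHES})$ with $S_i\subseteq\columns$, $|S_i|=b_i$, and $\bigcup_{i}S_i=\columns$. The covering condition forces $\xi(\bldb)=\sum_i b_i\ge\NEXT$ and $b_i\le\NEXT$, accounting exactly for the summation index set. \textbf{(iii)} Inside each $\FSMAT_i$, apply the four-block argument from Lemma~\ref{lemma:sm_lower_bound} verbatim: select the $b_i$ weight-one rows in $\binom{\NFILTER_{\HASH}}{b_i}$ ways, pair them with the columns of $S_i$ in $b_i!$ ways, and fill the complementary $(\NFILTER_{\HASH}-b_i)\times(\NIBF-b_i)$ block with a stopping matrix in $\z(\NFILTER_{\HASH}-b_i,\NIBF-b_i)$ ways. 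Multiplying these independent counts and summing over admissible $\bldb$ reproduces the right-hand side of~\eqref{eq:bound}.

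To identify $\cc(\NEXT,\bldb)$ with the covering-tuple count in step (ii), I would classify the unrestricted tuples $(S_1,\ldots,S_{\NHASHES})$ with $S_i\subseteq[\NEXT]$ and $|S_i|=b_i$ by the size $j$ and the identity of their union, producing
\[
\prod_{i\in[\NHASHES]}\binom{\NEXT}{b_i}=\sum_{j=0}^{\NEXT}\binom{\NEXT}{j}\,\cc(j,\bldb),
\]
whose rearrangement is exactly the stated recursion, with the base cases $\cc(0,\bldzero)=1$ and $\cc(0,\bldb)=0$ for $\bldb\ne\bldzero$ being immediate. A short uniqueness check then confirms that every enumerated tuple $(\columns,\bldb,(S_i)_i,\text{block data})$ produces a distinct state matrix: from the final matrix one recovers $\columns=\bigcup_i\columns_i$, each $S_i=\columns_i$, the weight-one row indices and pivot permutations, and finally the stopping complement.

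The main obstacle I anticipate is cementing the combinatorial identification of $\cc$, since the recursion mixes tuples over varying ground sets and sizes; beyond that the proof is a factor-wise repetition of the $\NHASHES=1$ argument. I would also emphasize that the inequality in~\eqref{eq:bound} is genuinely one-sided, because state matrices whose first-round peelable set strictly exceeds $\NEXT$, as well as those requiring several iterative peeling rounds to reach $\NEXT$ extractions, are deliberately excluded from this enumeration and account for the slack between $\ex(\NFILTER_{\HASH},\NIBF,\NHASHES,\NEXT)$ and the true number of admissible state matrices.
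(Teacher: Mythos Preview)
Your proposal is correct and follows essentially the same approach as the paper: choose the target column set $\columns$, count the covering tuples $(S_1,\ldots,S_{\NHASHES})$ via the recursion for $\cc$, and apply the four-block decomposition of Lemma~\ref{lemma:sm_lower_bound} independently in each sub-block. Your treatment is in fact slightly more careful than the paper's, since you spell out the injectivity check and identify both sources of slack (first-round pivot sets larger than $\NEXT$ and matrices needing several peeling rounds), whereas the paper's remark only mentions the latter.
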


\begin{proof}
Consider the partition of $\FSMAT$ into submatrices $\FSMAT_i$, $i \in [\NHASHES]$, as in \eqref{eq:submatrices}.

In order to extract $\NEXT$ elements from $\FILTER$, we choose the corresponding columns of $\FSMAT$ where these $\NEXT$ elements are extracted from. There are $\binom{\NIBF}{\NEXT}$ ways to choose these columns. We denote by $\columns$ the subset that contains them.

For every $\FSMAT_i$, $i \in [\NHASHES]$, it is possible to extract elements from the column subsets $\columns_i \subseteq \columns$ if the sum of the sizes of $\columns_i$, $i \in [\NHASHES]$, is at least $\NEXT$. Denote by $\cc(\NEXT, \bldb)$ the number of ways to choose the subsets of columns $\columns_i$, ${i \in [\NHASHES]}$, where
$\card{\columns_i} = b_i$. For a fixed vector $\bldb$, $\xi(\bldb) \geq \NEXT$, we can apply the same considerations as in Lemma~\ref{lemma:sm_lower_bound}, thus obtaining:
	\begin{equation}
	 \ex(\NFILTER_{\HASH}, \NIBF, \NHASHES, \NEXT) =
	 \binom{\NIBF}{\NEXT}  \cdot \cc(\NEXT, \bldb) \cdot \prod\limits_{i \in [\NHASHES]} \left[ \binom{\NFILTER_{\HASH}}{b_i} \cdot b_i! \cdot \z(\NFILTER_{\HASH}-b_i, \NIBF-b_i) \right].
	\end{equation}


The function $\cc(\NEXT, \bldb)$ counts the number of choices for the subsets of columns $\columns_i$, $i \in [\NHASHES]$, such that
\begin{equation}
\label{eq:subsets}
\cup_{i \in [\NHASHES]} \columns_i = \columns \; ,
\end{equation}
and $\card{\columns_i} = b_i$ for all $i \in [h]$. In particular, we take $\cc(\NEXT, \bldb) = 1$ for the case $\NEXT = 0$ and $\bldb = \bldzero$, and
$\cc(\NEXT, \bldb) = 0$ for the case $\NEXT = 0$ and $\bldb \neq \bldzero$.

Next, we compute the number of possible ways to choose the subsets $\columns_i \subseteq \columns$. This can be done in $\prod_{i \in [\NHASHES]} \binom{\NEXT}{b_i}$ ways. However, we are only interested in choices where~\eqref{eq:subsets} holds with equality. There are $\binom{\NEXT}{j}$ ways to choose a proper subset of $\columns$ of size $j$. For each choice of $j < \NEXT$, there are $\cc(j, \bldb)$ corresponding choices of $\columns_i$. Thus, we obtain
	\begin{equation}
	\label{eq:recursion}
	\cc(\NEXT, \bldb) = \prod\limits_{i \in [\NHASHES]} \binom{\NEXT}{b_i} - \sum\limits_{j=0}^{\NEXT-1} \left(\binom{\NEXT}{j} \cc(j, \bldb) \right).
	\end{equation}
The lemma statement follows immediately.

{\bf Remark:} the right-hand side of expression~\eqref{eq:bound} is a lower bound on the number of state matrices because
after extraction of the first $\NEXT$ elements from $\FILTER$ it is possible that some element $x$ becomes a pivot,
even if $x$ was not a pivot in the beginning of the execution of extraction.
\end{proof}
\medskip

By expanding the recursion~\eqref{eq:recursion} for the values of $\cc(\NEXT, \bldb)$, we observe that the following holds:
\begin{equation}
\begin{split}
\cc(\NEXT, \bldb) & = \sum_{i=1}^{\NEXT} (-1)^{e-i} \binom{e}{i} \prod_{b \in \bldb} \binom{i}{b}
\end{split}.
\end{equation}
We remark that the same expression can also be obtained directly, without using recursion, by using the inclusion-exclusion principle.

\subsection{Success probability for partial extraction}

Let $\FILTER$ be an IBF with $\NHASHES$ hash functions and $\NFILTER = \NHASHES \NFILTER_{\HASH}$ cells, which stores $\NIBF$ elements.
We introduce a random variable $Y_{\NFILTER_{\HASH}, \NHASHES, \NIBF}$, which represents the number of elements extracted from $\FILTER$.

We note that each element is mapped onto one of the $\NFILTER_{\HASH}$ cells by each of the $\NHASHES$ hash functions.
Therefore, the total number of the state matrices describing the state of $\FILTER$ is ${\NFILTER_{\HASH}}^{\NHASHES \NIBF}$.
Assume that each state matrix is chosen uniformly at random. This assumption represents, for example, the case when each element is inserted into any cell with equal probability, independently of other elements, and each hash function is chosen uniformly at random from $\SHASH$.

For any natural value $y$, $y \le \NIBF$, we have:
\begin{align}
\Pr({Y_{\NFILTER_{\HASH}, \NHASHES, \NIBF} \geq y}) \geq
\sum\limits_{\NEXT = y}^{\NIBF}
\frac{\ex(\NFILTER_{\HASH}, \NIBF, \NHASHES, \NEXT)}{{\NFILTER_{\HASH}}^{\NHASHES \NIBF}} \; , \label{eq:pr_extracted}
\end{align}
where $\ex(\NFILTER_{\HASH}, \NIBF, \NHASHES, \NEXT)$ is given in Lemma~\ref{lemma:sm_bound_generalized}.
However, for the case $\NEXT = 0$, the analogous result holds with equality:
\begin{align}
\Pr({Y_{\NFILTER_{\HASH}, \NHASHES, \NIBF} = 0}) = \frac{\z(\NFILTER_{\HASH}, \NIBF)^{\NHASHES}}{{\NFILTER_{\HASH}}^{\NHASHES \NIBF}} \label{eq:pr_noextract} \; .
\end{align}

We state the complementary result to that of Theorem~\ref{thm:goodrich_thm}, which gives the lower bound for the extraction success probability given
an extraction rate.
\begin{theorem}
	\label{thm:extraction_lower_bound}
	Let $\FILTER$ be an IBF with $\NHASHES$ hash functions and $\NFILTER = \NHASHES \NFILTER_{\HASH}$ cells, which stores $\NIBF$ elements. Then, $\FILTER$ fails to achieve extraction rate $R_{\Esf}$ with probability less than or equal to:
	\begin{align}
	1 - \sum\limits_{\NEXT = R_{\Esf} \cdot f }^{\NIBF}
	\frac{\ex(\NFILTER_{\HASH}, \NIBF, \NHASHES, \NEXT)}{{\NFILTER_{\HASH}}^{\NHASHES \NIBF}} \; ,
	\end{align}
	where each state matrix for $\FILTER$ is chosen uniformly at random.
\end{theorem}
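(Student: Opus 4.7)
The plan is to derive the bound directly by complementing the success event, because once Lemma~\ref{lemma:sm_bound_generalized} and the uniform prior on state matrices are in place, the claim is essentially a restatement of equation~\eqref{eq:pr_extracted}. First, I would translate the failure event into the language of the random variable $Y_{\NFILTER_{\HASH}, \NHASHES, \NIBF}$: the IBF achieves extraction rate $R_{\Esf}$ exactly when at least $R_{\Esf} \cdot \NIBF$ elements are extracted, so $\Pr(\mbox{failure}) = \Pr(Y_{\NFILTER_{\HASH}, \NHASHES, \NIBF} < R_{\Esf} \cdot \NIBF) = 1 - \Pr(Y_{\NFILTER_{\HASH}, \NHASHES, \NIBF} \ge R_{\Esf} \cdot \NIBF)$. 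If $R_{\Esf} \cdot \NIBF$ is not an integer, I would use $\lceil R_{\Esf} \cdot \NIBF \rceil$ as the lower index of summation, which only tightens the bound.

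Next, I would apply equation~\eqref{eq:pr_extracted} with $y = R_{\Esf} \cdot \NIBF$. The hypothesis that each state matrix is drawn uniformly at random from the ${\NFILTER_{\HASH}}^{\NHASHES \NIBF}$ possible matrices is precisely what converts the combinatorial counts from Lemma~\ref{lemma:sm_bound_generalized} into probabilities: the probability of any collection of state matrices is its cardinality divided by ${\NFILTER_{\HASH}}^{\NHASHES \NIBF}$. Substituting the resulting lower bound on $\Pr(Y_{\NFILTER_{\HASH}, \NHASHES, \NIBF} \ge R_{\Esf} \cdot \NIBF)$ into the complement gives the claimed upper bound on failure probability.

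The only genuinely non-routine point is justifying why summing the per-$\NEXT$ counts $\ex(\NFILTER_{\HASH}, \NIBF, \NHASHES, \NEXT)$ from Lemma~\ref{lemma:sm_bound_generalized} across $\NEXT \in \{R_{\Esf} \cdot \NIBF, \ldots, \NIBF\}$ still yields a \emph{lower} bound on the number of matrices for which $Y \ge R_{\Esf} \cdot \NIBF$. This is the heart of the matter. The counting in Lemma~\ref{lemma:sm_bound_generalized} chooses a specific subset $\columns$ of $\NEXT$ columns to be the ones extracted in the first round while the complementary block is a stopping matrix, and by the Remark there, $\ex(\NFILTER_{\HASH}, \NIBF, \NHASHES, \NEXT)$ undercounts the matrices from which at least $\NEXT$ elements can ultimately be extracted. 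Because the selection of $\columns$ is indexed by $\NEXT$, the configurations counted for distinct values of $\NEXT$ are disjoint, and therefore the sum remains a lower bound on the number of matrices realizing the event $\{Y \ge R_{\Esf} \cdot \NIBF\}$.

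I anticipate the main obstacle being purely expository rather than technical: making the disjointness argument above precise without invoking any structure beyond what Lemma~\ref{lemma:sm_bound_generalized} already established, and handling the boundary case $R_{\Esf} \cdot \NIBF = 0$ cleanly by appealing to equation~\eqref{eq:pr_noextract} instead of~\eqref{eq:pr_extracted}. Once those two minor points are settled, the theorem follows in a single line of algebra from the two displayed equations preceding it.
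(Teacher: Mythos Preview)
Your proposal is correct and matches the paper's approach: the paper states Theorem~\ref{thm:extraction_lower_bound} immediately after equation~\eqref{eq:pr_extracted} without a separate proof, treating it as the direct complement of that inequality with $y = R_{\Esf}\cdot\NIBF$. Your write-up is in fact more careful than the paper, since you explicitly address the disjointness of the configurations counted for different values of $\NEXT$ and the non-integer boundary case, neither of which the paper spells out.
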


\subsection{Experimental results}
\label{sec:expresult}

Next, we compare Theorem~\ref{thm:extraction_lower_bound} with its counterparts in the literature.
Theorem~\ref{thm:goodrich_thm} is applicable only in the case where $\NIBF \leq \NIBF_0(\NFILTER, \NHASHES)$. Theorem~2 in~\cite{2014IEITF..97.2309Y} does not provide a meaningful result when the right-hand side in~\cite[Equation (8)]{2014IEITF..97.2309Y} is larger or equal to $1$.
By contrast, Theorem~\ref{thm:extraction_lower_bound} provides results about partial extractability of the data, in particular for the combinations of parameters where the known methods do not succeed. There are no results about partial extractability in the literature to the best of our knowledge.

In what follows, we compare the result that follows from Theorem~\ref{thm:extraction_lower_bound} with the empirical results. In the experimental study, for every simulation run, we randomly sample $\NIBF$ elements and instantiate an IBF with randomly chosen hash functions $\HASH_1, \ldots, \HASH_{\NHASHES}$. We extract all the elements and compute the extraction rate $R_{\Esf}$. We then count the fraction of experimental runs when $R_{\Esf}$ is greater than the threshold.

The results are presented in Table~\ref{tbl:bound_example}.
We observe that the number of chosen hash functions $\NHASHES$ is critical in decreasing the extraction failure rate. In~\cite{EGUV11}, the recommended number of hash functions is 3-4 in the case when the IBF overhead is sufficiently large to fit all the inserted elements. However, if the number of the inserted elements exceeds a certain threshold, then a smaller number of hash functions yields a lower failure rate both theoretically and experimentally.

For comparison, we also provide the main term in the corresponding upper bound in~\cite{goodrich2011invertible} and the upper bound in~\cite{2014IEITF..97.2309Y}.
For the cases, when one of these results is not applicable, we write ``N/A'' in the corresponding entry in the table.
We note that the result in~\cite{goodrich2011invertible}, however, is obtained for the asymptotic regime, and it contains constants which were not obtained explicitly.
By using numerical examples, we observe that the gap between the result in~\cite{goodrich2011invertible} and the value that accounts for the constants could be very large even for relatively small parameters. Therefore, for small lengths, the numerical results based on~\cite[Theorem 1]{goodrich2011invertible} are not indicative. However, we present them in Table~\ref{tbl:bound_example} for the completeness of the discussion.

\renewcommand{\arraystretch}{1.09}
\begin{table}[htbp]
	\caption{IBF extraction failure probabilities for $\NFILTER=120$.}
	\begin{center}
	  \begin{tabular}{
				| @{\hspace{3pt}} c @{\hspace{3pt}}
				| @{\hspace{3pt}} c @{\hspace{3pt}}
				| @{\hspace{3pt}} c @{\hspace{3pt}}
				| @{\hspace{3pt}} c @{\hspace{3pt}}
				| @{\hspace{0pt}} c	@{\hspace{0pt}}
				| @{\hspace{3pt}} c @{\hspace{3pt}}
				| @{\hspace{3pt}} c @{\hspace{3pt}}
				|}
			\hline
			$\NHASHES$ &
			$\NIBF$ &
			\cite{goodrich2011invertible} &
			\cite{2014IEITF..97.2309Y} &
			\parbox{1.2cm}{\centering $R_{\Esf}$} &
			Theorem~\ref{thm:extraction_lower_bound} &
			Simulation \\
			\hline
			\hline
			
\multirow{4}{*}{2} &
\multirow{4}{*}{60} &
\multirow{4}{*}{$9.75 \cdot 10^{-1}$} &
\multirow{4}{*}{N/A} &
$0.1$ & $3.68 \cdot 10^{-17}$ & $0$ \\
\cline{5-7} &&&&
$0.2$ & $3.12 \cdot 10^{-11}$ & $0$ \\
\cline{5-7} &&&&
$0.5$ & $3.86 \cdot 10^{-2}$ & $0$ \\
\cline{5-7} &&&&
$1$ & $1$ & $5.19 \cdot 10^{-1}$ \\
\hline

\multirow{4}{*}{2} &
\multirow{4}{*}{80} &
\multirow{4}{*}{N/A} &
\multirow{4}{*}{N/A} &
$0.1$ & $2.86 \cdot 10^{-14}$ & $0$ \\
\cline{5-7} &&&&
$0.2$ & $6.56 \cdot 10^{-8}$ & $0$ \\
\cline{5-7} &&&&
$0.5$ & $7.43 \cdot 10^{-1}$ & $3.60 \cdot 10^{-3}$ \\
\cline{5-7} &&&&
$1$ & $1$ & $9.40 \cdot 10^{-1}$ \\
\hline

\multirow{4}{*}{2} &
\multirow{4}{*}{100} &
\multirow{4}{*}{N/A} &
\multirow{4}{*}{N/A} &
$0.1$ & $1.04 \cdot 10^{-10}$ & $0$ \\
\cline{5-7} &&&&
$0.2$ & $1.34 \cdot 10^{-4}$ & $0$ \\
\cline{5-7} &&&&
$0.5$ & $1$ & $2.82 \cdot 10^{-1}$ \\
\cline{5-7} &&&&
$1$ & $1$ & $1$ \\
\hline

\multirow{4}{*}{2} &
\multirow{4}{*}{120} &
\multirow{4}{*}{N/A} &
\multirow{4}{*}{N/A} &
$0.1$ & $3.67 \cdot 10^{-7}$ & $0$ \\
\cline{5-7} &&&&
$0.2$ & $4.36 \cdot 10^{-2}$ & $7.00 \cdot 10^{-4}$ \\
\cline{5-7} &&&&
$0.5$ & $1$ & $9.84 \cdot 10^{-1}$ \\
\cline{5-7} &&&&
$1$ & $1$ & $1$ \\
\hline

\multirow{4}{*}{3} &
\multirow{4}{*}{60} &
\multirow{4}{*}{$1.21 \cdot 10^{-1}$} &
\multirow{4}{*}{$3.17 \cdot 10^{-2}$} &
$0.1$ & $2.88 \cdot 10^{-15}$ & $0$ \\
\cline{5-7} &&&&
$0.2$ & $1.99 \cdot 10^{-9}$ & $0$ \\
\cline{5-7} &&&&
$0.5$ & $2.37 \cdot 10^{-1}$ & $0$ \\
\cline{5-7} &&&&
$1$ & $1$ & $7.90 \cdot 10^{-3}$ \\
\hline

\multirow{4}{*}{3} &
\multirow{4}{*}{80} &
\multirow{4}{*}{$2.17 \cdot 10^{-1}$} &
\multirow{4}{*}{N/A} &
$0.1$ & $6.80 \cdot 10^{-10}$ & $0$ \\
\cline{5-7} &&&&
$0.2$ & $1.65 \cdot 10^{-4}$ & $0$ \\
\cline{5-7} &&&&
$0.5$ & $9.99 \cdot 10^{-1}$ & $1.40 \cdot 10^{-3}$ \\
\cline{5-7} &&&&
$1$ & $1$ & $3.35 \cdot 10^{-2}$ \\
\hline

\multirow{4}{*}{3} &
\multirow{4}{*}{100} &
\multirow{4}{*}{N/A} &
\multirow{4}{*}{N/A} &
$0.1$ & $3.77 \cdot 10^{-5}$ & $0$ \\
\cline{5-7} &&&&
$0.2$ & $1.92 \cdot 10^{-1}$ & $6.00 \cdot 10^{-4}$ \\
\cline{5-7} &&&&
$0.5$ & $1$ & $5.50 \cdot 10^{-1}$ \\
\cline{5-7} &&&&
$1$ & $1$ & $8.73 \cdot 10^{-1}$ \\
\hline

\multirow{4}{*}{3} &
\multirow{4}{*}{120} &
\multirow{4}{*}{N/A} &
\multirow{4}{*}{N/A} &
$0.1$ & $4.34 \cdot 10^{-2}$ & $4.80 \cdot 10^{-3}$ \\
\cline{5-7} &&&&
$0.2$ & $9.80 \cdot 10^{-1}$ & $3.89 \cdot 10^{-1}$ \\
\cline{5-7} &&&&
$0.5$ & $1$ & $1$ \\
\cline{5-7} &&&&
$1$ & $1$ & $1$ \\
\hline

\multirow{4}{*}{4} &
\multirow{4}{*}{60} &
\multirow{4}{*}{$2.22 \cdot 10^{-2}$} &
\multirow{4}{*}{$2.25 \cdot 10^{-3}$} &
$0.1$ & $2.95 \cdot 10^{-3}$ & $0$ \\
\cline{5-7} &&&&
$0.2$ & $2.96 \cdot 10^{-3}$ & $0$ \\
\cline{5-7} &&&&
$0.5$ & $8.21 \cdot 10^{-1}$ & $0$ \\
\cline{5-7} &&&&
$1$ & $1$ & $0$ \\
\hline

\multirow{4}{*}{4} &
\multirow{4}{*}{80} &
\multirow{4}{*}{$3.96 \cdot 10^{-2}$} &
\multirow{4}{*}{N/A} &
$0.1$ & $4.93 \cdot 10^{-3}$ & $0$ \\
\cline{5-7} &&&&
$0.2$ & $9.33 \cdot 10^{-2}$ & $1.00 \cdot 10^{-4}$ \\
\cline{5-7} &&&&
$0.5$ & $1$ & $1.85 \cdot 10^{-2}$ \\
\cline{5-7} &&&&
$1$ & $1$ & $2.48 \cdot 10^{-2}$ \\
\hline

\multirow{4}{*}{4} &
\multirow{4}{*}{100} &
\multirow{4}{*}{N/A} &
\multirow{4}{*}{N/A} &
$0.1$ & $1.03 \cdot 10^{-1}$ & $8.60 \cdot 10^{-3}$ \\
\cline{5-7} &&&&
$0.2$ & $9.83 \cdot 10^{-1}$ & $3.38 \cdot 10^{-1}$ \\
\cline{5-7} &&&&
$0.5$ & $1$ & $9.93 \cdot 10^{-1}$ \\
\cline{5-7} &&&&
$1$ & $1$ & $9.99 \cdot 10^{-1}$ \\
\hline

\multirow{4}{*}{4} &
\multirow{4}{*}{120} &
\multirow{4}{*}{N/A} &
\multirow{4}{*}{N/A} &
$0.1$ & $9.03 \cdot 10^{-1}$ & $6.09 \cdot 10^{-1}$ \\
\cline{5-7} &&&&
$0.2$ & $1$ & $9.98 \cdot 10^{-1}$ \\
\cline{5-7} &&&&
$0.5$ & $1$ & $1$ \\
\cline{5-7} &&&&
$1$ & $1$ & $1$ \\
\hline

\multirow{4}{*}{5} &
\multirow{4}{*}{60} &
\multirow{4}{*}{$5.32 \cdot 10^{-3}$} &
\multirow{4}{*}{$2.64 \cdot 10^{-4}$} &
$0.1$ & $1.56 \cdot 10^{-2}$ & $0$ \\
\cline{5-7} &&&&
$0.2$ & $1.68 \cdot 10^{-2}$ & $0$ \\
\cline{5-7} &&&&
$0.5$ & $9.97 \cdot 10^{-1}$ & $0$ \\
\cline{5-7} &&&&
$1$ & $1$ & $0$ \\
\hline

\multirow{4}{*}{5} &
\multirow{4}{*}{80} &
\multirow{4}{*}{$9.50 \cdot 10^{-3}$} &
\multirow{4}{*}{N/A} &
$0.1$ & $5.05 \cdot 10^{-2}$ & $1.00 \cdot 10^{-3}$ \\
\cline{5-7} &&&&
$0.2$ & $8.24 \cdot 10^{-1}$ & $5.06 \cdot 10^{-2}$ \\
\cline{5-7} &&&&
$0.5$ & $1$ & $4.29 \cdot 10^{-1}$ \\
\cline{5-7} &&&&
$1$ & $1$ & $4.45 \cdot 10^{-1}$ \\
\hline

\multirow{4}{*}{5} &
\multirow{4}{*}{100} &
\multirow{4}{*}{N/A} &
\multirow{4}{*}{N/A} &
$0.1$ & $8.47 \cdot 10^{-1}$ & $4.94 \cdot 10^{-1}$ \\
\cline{5-7} &&&&
$0.2$ & $1$ & $9.83 \cdot 10^{-1}$ \\
\cline{5-7} &&&&
$0.5$ & $1$ & $1$ \\
\cline{5-7} &&&&
$1$ & $1$ & $1$ \\
\hline

\multirow{4}{*}{5} &
\multirow{4}{*}{120} &
\multirow{4}{*}{N/A} &
\multirow{4}{*}{N/A} &
$0.1$ & $1$ & $9.96 \cdot 10^{-1}$ \\
\cline{5-7} &&&&
$0.2$ & $1$ & $1$ \\
\cline{5-7} &&&&
$0.5$ & $1$ & $1$ \\
\cline{5-7} &&&&
$1$ & $1$ & $1$ \\
\hline

		\end{tabular}
	\end{center}
	\label{tbl:bound_example}
\end{table}

\section{Iterative set reconciliation}
\label{sec:iterative}
\subsection{Single-round protocol}

Considers an instance of a set reconciliation problem, where two parties, $A$ and $B$, possess the sets of data $\SET_A$ and $\SET_B$, respectively,
$\SET_A, \SET_B \subseteq \DOMAIN$. Recall that $\diff \triangleq \card{\SET_A \triangle \SET_B}$.

An IBF-based protocol for 2-party set reconciliation uses an addition of two IBFs. The protocol uses a procedure $\Add()$, which takes two IBFs $\FILTER_A$ and $\FILTER_B$ of the same size and returns an IBF $\FILTER$ where every cell value $\Fcount$, $\Fsum$ and $\Fchsum$ is the sum of the corresponding values of the cells in $\FILTER_A$ and $\FILTER_B$. For convenience, we define $-\FILTER$ to be an IBF where all cell values $\Fcount$, $\Fsum$ and $\Fchsum$ are replaced by the additive inverses of themselves. Protocol~\ref{pr:ibf_reconciliation} below is used to reconcile the sets $\SET_A$ and $\SET_B$. In this protocol, we use a pre-selected upper bound $\widetilde{\diff}$ on $\diff$, and the number of hash functions $\NHASHES$. The parties choose $\NFILTER = \left\lceil {c_{\NHASHES} \widetilde{\diff} }\right\rceil$, where $c_{\NHASHES} > 0$ is given in Theorem~\ref{thm:goodrich_thm}.

\begin{protocol}
	\label{pr:ibf_reconciliation}
	\leavevmode
\begin{enumerate}
	\item $A$ and $B$ initialize $\FILTER_A$ and $\FILTER_B$ of size $\NFILTER$, respectively.
	\item For all $\elem \in \SET_A$ and $y \in \SET_B$, do $\FILTER_A \gets \Insert(\FILTER_A, \elem)$ and $\FILTER_B \gets \Insert(\FILTER_B, y)$.
	\item $A$ and $B$ exchange $\FILTER_A$ and $\FILTER_B$.
	\item $A$ and $B$ compute $\FILTER_{\Delta} \gets \Add(\FILTER_A, -\FILTER_B)$.
	\item $A$ and $B$ obtain $\SET_{\Delta} \gets \Extract(\FILTER_{\Delta})$.
	\item $A$ and $B$ add elements from $\SET_{\Delta}$ to the sets $\SET_A$ and $\SET_B$, respectively.
\end{enumerate}
\end{protocol}

Denote $\NDOMAIN = \card{\DOMAIN}$. The communication complexity of the protocol for a fixed $\NHASHES$ is $O(\widetilde{\diff} \cdot \log_2\NDOMAIN)$. If $\diff \approx \widetilde{\diff}$, then Protocol~\ref{pr:ibf_reconciliation} is asymptotically optimal. However, estimating $\diff$ efficiently in the case where $\diff \ll \card{\SET_A \cup \SET_B}$ is non-trivial. Strata Estimator protocol~\cite{EGUV11} uses constant-size IBFs where a increasing subset of the sets are inserted into it. When extraction succeeds, the parties can determine the value of $\diff$ within a factor of two and perform a full reconciliation protocol.

Instead of running the Strata Estimator before Protocol~\ref{pr:ibf_reconciliation}, the parties could instead run it as a single round-trip protocol when the first party sends the Strata Estimator of its set, the second party estimates the symmetric difference size and returns with an IBF of the required size.

\subsection{Iterative reconciliation protocol}

The existing set reconciliation protocols, which are based on IBFs, require sufficiently large overhead. This is because the extraction procedure needs to return all the elements in the symmetric difference~\cite{EGUV11, goodrich2011invertible, MP18}. However, this approach is not always optimal.
For example, in a broadcast network with a single transmitter and several receivers, one-way reconciliation can be used to synchronize the receivers' databases with that of the transmitter. This resembles special cases of the coded caching and index coding~\cite{MAN14, LOJ19, BYBJK11}, where all the missing elements have to be delivered to each of the receivers. Another example is a two-party set reconciliation protocol operating over bandwidth-constrained channel. Thus, the parties might want to run the first round of the protocol quickly to reduce the difference between the two databases.

In the sequel, we define the following protocol where the overhead of an IBF is not sufficiently large for extracting all the elements in the symmetric difference in a single round. However, running the protocol for several rounds allows to reconcile all the elements in the symmetric difference with high probability.

In order to determine if the parties have already reconciled the sets, we define a set-hash function $\SETHASH(\cdot)$ which takes a subset $\SET \subseteq \DOMAIN$ as an input, and it returns a short hash value of this subset. Additionally, we assume that the hash functions $\HASH_1, \HASH_2, \ldots, \HASH_{\NHASHES}$ can be randomly sampled. This can be achieved by defining $\HASH_i(\elem) = \HASH(\seed, i, \elem)$, $i \in [h]$, for some initialization value $\seed$ and well-defined hash function family $\SHASH$. The complete description of the hash function is then given as $(\seed, i)$.

The protocol for iterative reconciliation is presented as Protocol~\ref{pr:iterative_reconciliation}. Here $\NFILTER$ is the size of the IBFs, and $\NHASHES$ is the number of hash functions, whose values are set prior to the execution of the protocol. The protocol terminates when the sets of elements are reconciled, as it is shown in Lemma~\ref{lemma:protocol_termination}.

\begin{protocol}
	\label{pr:iterative_reconciliation}
	\leavevmode
\begin{enumerate}
	\item\label{pr:start} $A$ sends $\SETHASH(\SET_A)$ to $B$.
	\item\label{pr:terminate} $B$ terminates protocol if $\SETHASH(\SET_B) = \SETHASH(\SET_A)$.
	\item $A$ initializes hash functions $\HASH_1, \HASH_2, \ldots, \HASH_{\NHASHES}$ and IBF $\FILTER_A$ of size $\NFILTER$, and inserts all $\elem \in \SET_A$ into $\FILTER_A$.
	\item $A$ sends $\HASH_1, \HASH_2, \ldots, \HASH_{\NHASHES}$, and $\FILTER_A$ to $B$.
	\item $B$ initializes $\FILTER_B$ of size $\NFILTER$, and inserts all $\elem \in \SET_B$ into $\FILTER_B$.
	\item\label{pr:set_diff} $B$ computes $\FILTER_{\Delta} \gets \Add(\FILTER_A, -\FILTER_B)$.
	\item\label{pr:extracted} $B$ computes $\SET_B \gets \SET_B \cup \SET_{\Delta}$, where $\SET_{\Delta} \gets \Extract(\FILTER_{\Delta})$.
	\item\label{pr:b_reply} $B$ sends $\SET_{\Delta'} \gets \SET_{\Delta} \setminus \SET_B$ to $A$.
	\item\label{pr:a_recon} $A$ computes $\SET_A \gets \SET_A \cup \SET_{\Delta'}$.
	\item Go to Step~\ref{pr:start}.
\end{enumerate}
\end{protocol}

\begin{lemma}
	\label{lemma:protocol_termination}
	For any initial sets $\SET_A$ and $\SET_B$, with high probability Protocol~\ref{pr:iterative_reconciliation} terminates with both $A$ and $B$ possessing $\SET_A \cup \SET_B$ if $\NFILTER_{\NHASHES} > 1$.
\end{lemma}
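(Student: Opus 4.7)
The plan is to decompose the argument into a deterministic \emph{correctness} invariant and a probabilistic \emph{termination} argument. For correctness, I would show by induction on the round number that (i) the union $\SET_A \cup \SET_B$ is preserved and equals its initial value, (ii) both $\SET_A$ and $\SET_B$ grow monotonically, and (iii) every element inserted via Step~\ref{pr:extracted} or Step~\ref{pr:a_recon} genuinely belongs to $\SET_A \triangle \SET_B$ at that round. Claim (iii) follows from the structure of $\FILTER_\Delta = \Add(\FILTER_A, -\FILTER_B)$: a checksum-verified cell with $\Fcount = +1$ must correspond to an element of $\SET_A \setminus \SET_B$, and symmetrically for $\Fcount = -1$, so no spurious element is ever introduced. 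Whenever the check in Step~\ref{pr:terminate} fires with $\SETHASH(\SET_A) = \SETHASH(\SET_B)$, collision resistance of $\SETHASH$ gives $\SET_A = \SET_B$ with overwhelming probability, and combined with the invariant this forces both sets to equal the original $\SET_A \cup \SET_B$.

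For termination, let $d_t \triangleq |\SET_A \triangle \SET_B|$ at the start of round $t$. Because $\FILTER_\Delta$ effectively stores exactly the $d_t$ signed symmetric-difference elements, its state matrix is distributed as that of an IBF with $\NIBF = d_t$. Since Protocol~\ref{pr:iterative_reconciliation} re-samples $\HASH_1, \ldots, \HASH_{\NHASHES}$ via a fresh $\seed$ each round, and by the uniformity assumption on $\SHASH$ made in Section~\ref{sec:IBF}, the per-round state matrices are mutually independent and each uniformly distributed over the $\NFILTER_\HASH^{\NHASHES d_t}$ possibilities. Equation~\eqref{eq:pr_noextract} then yields
\begin{equation}
\Pr(\text{no extraction in round } t) \;=\; \frac{\z(\NFILTER_\HASH, d_t)^{\NHASHES}}{\NFILTER_\HASH^{\NHASHES d_t}},
\end{equation}
which is strictly less than $1$ whenever $\NFILTER_\HASH \geq 2$ and $d_t \geq 1$: one non-stopping configuration suffices, e.g.\ placing a single column in row $1$ and the remaining $d_t - 1$ columns in row $2$ leaves row $1$ as a pivot, so $\z(\NFILTER_\HASH, d_t) < \NFILTER_\HASH^{d_t}$. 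Call the complementary probability $q(d_t) > 0$, and let $q_{\min} \triangleq \min_{1 \leq d \leq \diff} q(d) > 0$ (a finite minimum of positive values). Since $d_t$ strictly decreases whenever an extraction succeeds and $d_t$ evolves on the finite absorbing-at-$0$ chain $\{0,1,\ldots,\diff\}$, independence across rounds gives $\Pr(d_T > 0) \leq (1 - q_{\min})^T \to 0$, proving termination together with the correctness step.

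The main obstacle is that Theorem~\ref{thm:extraction_lower_bound} with $\NEXT = 1$ can yield a \emph{vacuous} (zero) lower bound in precisely the boundary cases needed here; for example, when $d_t = 2$ the factor $\z(\NFILTER_\HASH - 1, 1) = 0$ causes the bound on $\ex(\NFILTER_\HASH, d_t, \NHASHES, 1)$ to collapse, and one cannot conclude $q(d_t) > 0$ from Theorem~\ref{thm:extraction_lower_bound} directly. The remedy is to work with the \emph{exact} identity~\eqref{eq:pr_noextract} rather than the lower bound, since strict positivity of $q(d_t)$ only requires exhibiting a single non-stopping configuration of the state matrix, which the hypothesis $\NFILTER_\HASH > 1$ makes trivial. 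A secondary subtlety is the reliance on mutual independence of rounds through the fresh $\seed$; without it, the same unfavorable state matrix could recur and termination would no longer be guaranteed. A quantitative bound on the expected number of rounds (as opposed to almost-sure termination) would require estimating $q(d_t)$ for large $d_t$, which is the role played by Theorem~\ref{thm:extraction_lower_bound} in the remainder of Section~\ref{sec:iterative}.
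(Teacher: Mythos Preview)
Your proposal is correct and follows essentially the same two-part skeleton as the paper's own proof: extracted elements lie in $\SET_A \triangle \SET_B$ because common elements cancel in $\FILTER_\Delta$, and termination follows because each round has a strictly positive probability of extracting at least one element. Your version is considerably more rigorous than the paper's rather informal argument---in particular, your explicit use of~\eqref{eq:pr_noextract} together with a concrete non-stopping configuration to establish $q(d_t)>0$, and your explicit appeal to independence via fresh seeds, fill in details the paper leaves implicit.
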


\begin{proof}
	The number of elements reconciled in every rounds correspond to the number of extracted elements from the IBF in Step~\ref{pr:extracted}. In Step~\ref{pr:set_diff}, common elements $\SET_A \cap \SET_B$ cancel out in $\FILTER_{\Delta}$, and thus $\FILTER_{\Delta}$ contains only the elements in $\SET_A \triangle \SET_B$. Thus, in Step~\ref{pr:extracted} the extracted elements $\SET_{\Delta}$ are a subset of $\SET_A \triangle \SET_B$.

	We observe that the protocol terminates. If $\card{\SET_A \triangle \SET_B} > 0$, then, with a nonzero probability (over selections of hash functions and element sets), there is at least one assignment of hash functions $\HASH_{1}, \ldots, \HASH_{\NHASHES}$ which allows for extracting of at least one element. Then,	a number of extracted elements is larger than zero with a non-zero probability.
	Thus, after a sufficiently large number of rounds, the protocol eventually terminates with $\SET_A = \SET_B$ indicated by set-hash equality in Step~\ref{pr:terminate}.
\end{proof}

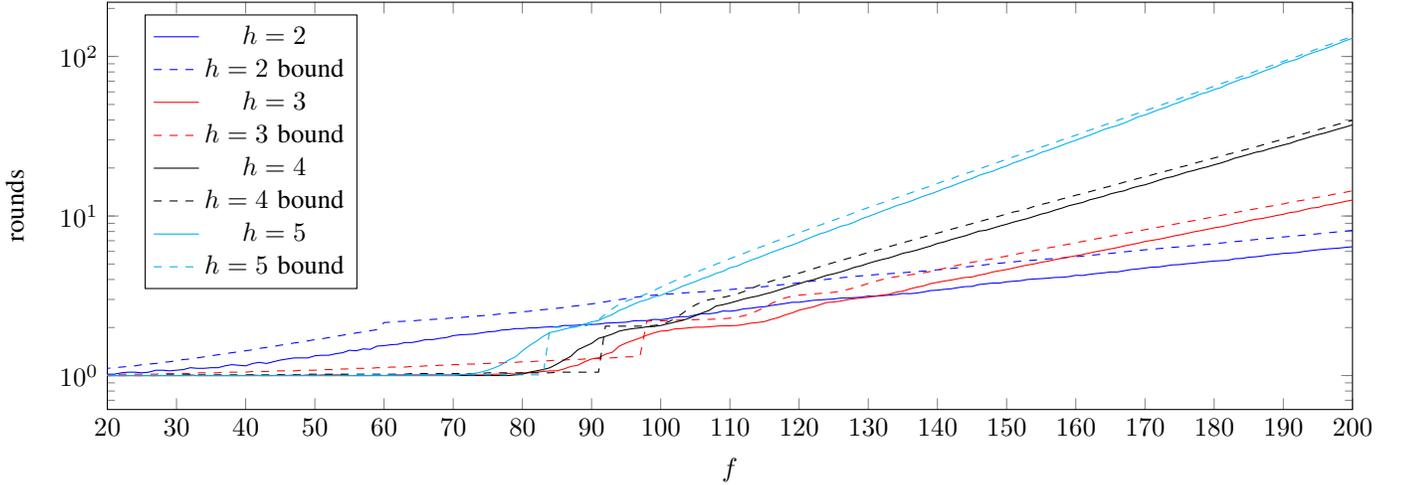
\begin{figure*}[tb]
	\begin{tikzpicture}
	\begin{axis}[
	xlabel={$\NIBF$},
	ylabel={rounds},
	xmin=20,
	xmax=200,
	ymode=log,
	legend pos=north west,
	width=\textwidth,
	height=7cm,
	]
	\addplot[
	blue,
	] table [x=f, y=2, col sep=semicolon] {iterative_recon2.csv};
	\addlegendentry{$h=2$};
	\addplot[
	blue,
	dashed,
	] table[x=x, y=2, col sep=comma]{iterative_recon_data2.csv};
	\addlegendentry{$h=2$ bound}
	\addplot[
	red,
	] table [x=f, y=3, col sep=semicolon] {iterative_recon2.csv};
	\addlegendentry{$h=3$};
	\addplot[
	red,
	dashed,
	] table[x=x, y=3, col sep=comma]{iterative_recon_data2.csv};
	\addlegendentry{$h=3$ bound}
	\addplot[
	black,
	] table [x=f, y=4, col sep=semicolon] {iterative_recon2.csv};
	\addlegendentry{$h=4$};
	\addplot[
	black,
	dashed,
	] table[x=x, y=4, col sep=comma]{iterative_recon_data2.csv};
	\addlegendentry{$h=4$ bound}
	\addplot[
	cyan,
	] table [x=f, y=5, col sep=semicolon] {iterative_recon2.csv};
	\addlegendentry{$h=5$};
	\addplot[
	cyan,
	dashed,
	] table[x=x, y=5, col sep=comma]{iterative_recon_data2.csv};
	\addlegendentry{$h=5$ bound}
	\end{axis}
	\end{tikzpicture}
	\caption{Experimental number of rounds for iterative reconciliation.}
	\label{fig:iterative_rounds}
\end{figure*}

There are several variations of the protocol, which do not change the analysis, but allow it to be efficiently used in different scenarios.
Due to asymmetry of the messages sent in this protocol (set-hash is sent only by $A$, while $B$ replies only with the elements in the set difference), this protocol is suitable for client-server model. One can change the protocol to suit the peer-to-peer model. In order to do this, the exchanged messages should be similar. This can be achieved by omitting Steps \ref{pr:b_reply}--\ref{pr:a_recon} and changing the roles of parties $A$ and $B$ when looping to Step \ref{pr:start}.

It is also possible to turn the protocol into a one-way reconciliation protocol as described above (i.e. only one party has reconciled sets at the end of the protocol). This can be achieved by omitting Steps \ref{pr:b_reply}--\ref{pr:a_recon}.

\section{Numerical results}
\label{sec:numerical}

During the execution of Protocol~\ref{pr:iterative_reconciliation}, the parties need to know when the reconciliation has been completed. It is beneficial, therefore, to know the expected number of the required protocol rounds. Next, we estimate this number analytically, and compare it with the simulation results.

Define the sequence of the random variables $\PRR(i)$, $i \geq 0$, whose values denote the number of the elements in the symmetric difference $\SET_A \triangle \SET_B$ after completion of round $i$ in Protocol~\ref{pr:iterative_reconciliation}. In the sequel, we call the variable $\PRR(i)$ \emph{the $i$-th state} of the protocol.
We remark that the transition from the state $\PRR(i)$ into the next state $\PRR(i+1)$ does not depend on the states $\PRR(i')$ for $i' < i$,
thus forming a Markov chain. It follows from Lemma~\ref{lemma:protocol_termination} that for any its realization, the sequence $\PRR(i)$ is monotonically non-increasing with $i$, and it approaches $0$ for $i \rightarrow \infty$.

Next, we use Lemmas~\ref{lemma:sm_lower_bound}-\ref{lemma:sm_bound_generalized} and Theorem~\ref{thm:goodrich_thm}
for estimation of the transition probabilities between different states. If $\PRR(i) \geq \NFILTER/c_{\NHASHES}$,
then from Lemmas~\ref{lemma:sm_lower_bound}-\ref{lemma:sm_bound_generalized}, we have:
\begin{align}
\label{eq:Markov-1}
\Pr(\PRR(i+1) = \NIBF-\NEXT \mid \PRR(i) = \NIBF) = \frac{\ex(\NFILTER_{\HASH}, \NIBF, \NHASHES, \NEXT)}{\NFILTER_{\HASH}^{\NHASHES \NIBF}} \; .
\end{align}
For $\PRR(i) < \frac{\NFILTER}{c_{\NHASHES}}$, it is observed in \cite{goodrich2011invertible} and \cite{2014IEITF..97.2309Y} that the extraction failure probability is dominated by the case where two different elements are inserted into the same subset of cells. As it is shown in~\cite{goodrich2011invertible}, this yields the following expressions for the probability that two elements are left in the IBF after extraction:
\begin{align}
\Pr(\PRR(i+1) = 2 \mid \PRR(i)) = \binom{\NIBF}{2} \binom{\NFILTER}{\NHASHES} \left(\frac{\NHASHES}{\NFILTER}\right)^{2 \NHASHES} \; ,
\label{eq:Markov-2}
\end{align}
and for the probability that all the elements are successfully extracted (when ignoring constants in the probability expression):
\begin{align}
\Pr(\PRR(i+1) = 0 \mid \PRR(i)) = 1-\binom{\NIBF}{2} \binom{\NFILTER}{\NHASHES} \left(\frac{\NHASHES}{\NFILTER}\right)^{2 \NHASHES} \; .
\label{eq:Markov-3}
\end{align}

We observe that the results of Lemma~\ref{lemma:sm_lower_bound} and Lemma~\ref{lemma:sm_bound_generalized} yield a lower bound on the number of extracted elements. Thus, the expected number of steps obtained from the relations~\eqref{eq:Markov-1}, \eqref{eq:Markov-2} and~\eqref{eq:Markov-3} imply an upper bound on the expected number of rounds in Protocol~\ref{pr:iterative_reconciliation}.

We compute the expected number of steps for $\NFILTER=120$, $\NHASHES \in \{2,3,4,5\}$ and $\NIBF \in [20, 200]$. The results are shown in Figure~\ref{fig:iterative_rounds} by using dashed lines. For $\NHASHES = 2$, we used $c_2 = 2$. For $\NHASHES > 2$, the values of $c_\NHASHES$ are as in Table~\ref{tbl:ibf_overhead}.

We compare the numerical bounds with the simulation results, for the same choices of $\NFILTER$, $\NHASHES$ and $\NIBF \in [20, 200]$. For every set of parameters, we ran the protocol 1000 times with randomly chosen elements and hash functions. The elements are chosen from $\DOMAIN$
 uniformly at random, one by one, while ensuring that the same element is not chosen twice.
The domain $\DOMAIN$ is the field of 256-bit long integers modulo the prime number
$11579208923731619542357098500868790785326998466564$\\
$0564039457584007913129640233$.

For the hash functions, we use the 256-bit long version of SHA-2 with uniformly chosen 32-bit long random seeds~\cite{Sha-2}.
The received value is converted into an integer, and its residue modulo $\NFILTER_H$ is used as the index of the cell in the subfilter.
The average number of rounds for full reconciliation is shown in Fig~\ref{fig:iterative_rounds} by using the solid lines.

We observe that the analytical estimates are quite close to the simulated results. We also observe that, for the selected parameters, the number of rounds is smaller for $\NHASHES \in \{3,4\}$ than for $\NHASHES = 5$ in both under- and over-threshold IBFs. By comparing the results for $\NHASHES = 3$ and $\NHASHES = 4$, we observe that the performance is similar for the under-threshold IBF, but the choice $\NHASHES = 3$ allows for a smaller number of rounds for over-threshold IBF. Since the threshold for $\NHASHES = 3$ is larger than for $\NHASHES = 4$, then a larger number of elements can be reconciled in a single round. We also observe that for $\NHASHES = 2$, the protocol underperforms when IBF is under-threshold. In this case the performance is weaker than for $\NHASHES = 3$ if $\NIBF \leq \NFILTER$, yet it allows for a smaller number of rounds in the case where $\NIBF > \NFILTER$.

\section{Conclusions}
\label{sec:conclusions}

In this work, we presented analysis of failure probability for partial extraction of elements from an IBFs.
The estimates on the failure probability in this work generalized and improved the results in the preceding works, in
particular in~\cite{goodrich2011invertible, 2014IEITF..97.2309Y}.
We also proposed a multi-round protocol for set reconciliation between different parties, which is based on partial
extraction of the data from the IBFs. The IBFs in this protocol required a smaller overhead than their counterparts
in the literature. We analyzed the number of rounds for such a protocol (a) by a recursive analytical formula;
(b) by a simulation. We observed that the analytical result match the simulation results quite closely.


\begin{thebibliography}{00}
\bibitem{Sha-2} ``SHA-2 Standard'', National Institute of Standards and Technology (NIST), Secure Hash Standard, FIPS PUB 180-2,
{\tt http://www.itl.nist.gov/fipspuhs/fip180-2.htm}.
\bibitem{BYBJK11} Z. Bar-Yossef, Y. Birk, T.S. Jayram, and T. Kol, ``Index coding with side information,'' {\it IEEE Transactions on Information Theory,} vol. 57, no. 3, pp. 1479--1494, 2011.
\bibitem{EGUV11} D. Eppstein, M. Goodrich, F. Uyeda, and G. Varghese, ``What’s the difference?: efficient set reconciliation without prior context,'' {\it ACM SIGCOMM Computer Communication Review,} vol. 41, pp. 218–-229, 2011.
\bibitem{goodrich2011short} M.T. Goodrich and M. Mitzenmacher, ``Invertible Bloom lookup tables,'' {\it 49th Annual Allerton Conference on Communication, Control, and Computing,} pp. 792--799, 2011.
\bibitem{goodrich2011invertible} M.T. Goodrich and M. Mitzenmacher, ``Invertible Bloom lookup tables,'' Arxiv report, {\tt https://arxiv.org/abs/1101.2245}.
\bibitem{LOJ19} M. Li, L. Ong, and S.J. Johnson, ``Cooperative Multi-Sender Index Coding,'' {\it IEEE Transactions on Information Theory,} vol. 65, no. 3, pp. 1725--1739, 2019.
\bibitem{MAN14} M.A. Maddah-Ali and U. Niesen, ``Fundamental Limits of Caching'', {\it IEEE Transactions on Information Theory,} vol. 60, no. 5, pp. 2856--2867, 2014.
\bibitem{MTZ03} Y. Minsky, A. Trachtenberg, and R. Zippel, ``Set reconciliation with nearly optimal communication complexity,'' {\it IEEE Transactions on Information Theory,} vol. 49, pp. 2213--2218, 2003.
\bibitem{MP18} M. Mitzenmacher and R. Pagh, ``Simple multi-party set reconciliation,'' {\it Distributed Computing,} vol. 31, pp. 441--453, 2018.
\bibitem{2014IEITF..97.2309Y} D. Yugawa and T. Wadayama, ``Finite Length Analysis on Listing Failure Probability of Invertible Bloom Lookup Tables,'' {\it IEEE International Symposium on Information Theory,} pp. 3030--3034, 2013.
\bibitem{Yugawa} D. Yugawa and T. Wadayama, ``Finite Length Analysis on Listing Failure Probability of Invertible Bloom Lookup Tables,'' {\it IEICE Transactions on Fundamentals of Electronics Communications and Computer Sciences,} vol. E97.A, issue 12, pp. 2309--2316, 2014.
\end{thebibliography}
\end{document}